\title{SAMPLE SIZE COGNIZANT DETECTION OF SIGNALS IN WHITE NOISE}
\begin{document}
%
\maketitle
\begin{abstract}
The detection and estimation of signals in noisy, limited data is a problem of interest to many scientific and engineering communities. We present a computationally simple, sample eigenvalue based procedure for estimating the number of high-dimensional signals in white noise when there are relatively few samples. We highlight a fundamental asymptotic limit of sample eigenvalue based detection of weak high-dimensional signals from a limited sample size and discuss its implication for the detection of two closely spaced signals. 

This motivates our heuristic definition of the \textit{effective number of identifiable signals}. Numerical simulations are used to demonstrate the consistency of the algorithm with respect to the effective number of signals and the superior performance of the algorithm with respect to Wax and Kailath's ``asymptotically consistent'' MDL based estimator. 

\end{abstract}
\begin{keywords}
Signal detection, eigen-inference, random matrices
\end{keywords}
\section{INTRODUCTION}
\label{sec:intro}
The observation vector, in many signal processing applications, can be modelled as a superposition of a finite number of signals embedded in additive noise. Detecting the number of signals present becomes a key issue and is often the starting point for the signal parameter estimation problem. When the signals and the noise are assumed to be samples of a stationary, ergodic Gaussian vector process, the sample covariance matrix formed from $m$ observations has the Wishart distribution. 

The proposed algorithm uses an information theoretic criterion, motivated by the approach taken by Wax and Kailath (henceforth WK) in \cite{kailath-wax}, for determining the number of signals in white noise by performing inference on the eigenvalues of the resulting sample covariance matrix. The form of the estimator is motivated by the distributional properties of moments of the eigenvalues of large dimensional Wishart matrices \cite{dumitriu05a}. 

The proposed estimator was derived by explicitly accounting for the blurring and fluctuations of the eigenvalues due to sample size constraints. Consequently, there is a greater theoretical justification for employing the proposed estimator in sample starved settings unlike the WK estimators which were derived assuming that the sample size greatly exceeds the number of sensors. This is reflected in the improved performance relative to the ``asymptotically consistent'' WK MDL based estimator. 

Another important contribution of this paper is the description of a fundamental limit of eigen-inference, \textit{i.e.}, inference using the sample eigenvalues alone. The concept of \textit{effective number of identifiable signals}, introduced herein, explains why, asymptotically, if the signal level is below a threshold that depends on the noise variance, sample size and the dimensionality of the system, then reliable detection is not possible. 

This paper is organized as follows. The problem is formulated in Section \ref{sec:lrcf problem formulation}. An estimator for the number of signals present that exploits results from random matrix theory is derived in Section \ref{sec:number of signals}. The fundamental limits of sample eigenvalue based detection and the concept of \textit{effective number of signals} are discussed in Section \ref{sec:effective number of signals}. Simulation results are presented in Section \ref{sec:lrcf simulations} while some concluding remarks and directions for future research are presented in Section \ref{sec:lrcf conclusion}.

\section{PROBLEM FORMULATION}
\label{sec:lrcf problem formulation}

We observe $m$ samples (``snapshots'') of possibly signal bearing $n$-dimensional snapshot vectors ${\bf x}_{1}, \ldots, {\bf x}_{m}$ where for each $i$, ${\bf x}_{i} \sim \mathcal{N}_{n}(0,{\bf R})$ and ${\bf x}_{i}$ are mutually independent.  The snapshot vectors are modelled as 
\begin{equation}\label{eq:superposition problem}
{\bf x}_{i} = {\bf A}\,{\bf s}_{i}+{\bf z}_{i}  \qquad \textrm{for } i = 1,\ldots,m,
\end{equation}
where ${\bf z}_{i} \sim \mathcal{N}_{n}(0,\sigma^{2}{\bf I})$, denotes an $n$-dimensional (real or complex) Gaussian noise vector where $\sigma^{2}$ is generically unknown, ${\bf s}_{i} \sim \mathcal{N}_{k}({\bf 0},{\bf R}_{s})$ denotes a $k$-dimensional (real or complex) Gaussian signal vector with covariance ${\bf R}_{s}$, and ${\bf A}$ is a $n \times k$  unknown non-random matrix. 

Since the signal and noise vectors are independent of each other, the covariance matrix of ${\bf x}_{i}$ can hence be decomposed as
\begin{equation}\label{eq:R model}
{\bf R} =  {\bf \Psi} + \sigma^{2} {\bf I}
\end{equation}
where
\begin{equation}
{\bf \Psi} = {\bf A}{\bf R}_{s}{\bf A}' ,
\end{equation}
with $'$ denoting the conjugate transpose. Assuming that the matrix ${\bf A}$ is of full column rank, {\it i.e.}, the columns of ${\bf A}$ are linearly independent, and that the covariance matrix of the signals ${\bf R}_{s}$ is nonsingular, it follows that the rank of ${\bf \Psi}$ is $k$. Equivalently, the $n-k$ smallest eigenvalues of ${\bf \Psi}$ are equal to zero. 

If we denote the eigenvalues of ${\bf R}$ by $\lambda_{1}\geq \lambda_{2} \geq \ldots \geq \lambda_{n}$ then it follows that the smallest $n-k$ eigenvalues of ${\bf R}$ are all equal to $\sigma^{2}$ so that
\begin{equation}
\lambda_{k+1}= \lambda_{k+2} = \ldots = \lambda_{n} = \lambda = \sigma^{2}.
\end{equation}
Thus, if the true covariance matrix ${\bf R}$ were known apriori, the dimension of the signal vector $k$ can be determined from the multiplicity of the smallest eigenvalue of ${\bf R}$.  The problem in practice is that the covariance matrix ${\bf R}$ is unknown so that such a straight-forward algorithm cannot be used. The signal detection and estimation problem is hence posed in terms of an inference problem on $m$ samples of $n$-dimensional multivariate real or complex Gaussian snapshot vectors.  

A classical approach to this problem, developed by Bartlett \cite{bartlett54a} and Lawley \cite{lawley56a}, uses a sequence of hypothesis tests. Though this approach is sophisticated, the main problem is the subjective judgement needed by the practitioner in selecting the threshold levels for the different tests. This was overcome by Wax and Kailath  in \cite{kailath-wax} wherein they propose an estimator for the number of signals (\underline{assuming $m>n$}) based on the eigenvalues $l_{1} \geq l_{2} \geq \ldots \geq l_{n}$ of the sample covariance matrix (SCM) defined by
\begin{equation}
\widehat{{\bf R}} = \frac{1}{m} \sum_{i=1}^{m} {\bf x}_{i} {\bf x}_{i}' = \frac{1}{m} {\bf X}{\bf X}'
\end{equation}
where ${\bf X}= [{\bf x}_{1}| \ldots| {\bf x}_{m}]$ is the matrix of observations (samples).  The Akaike Information Criteria (AIC) form of the estimator is given by
\begin{equation}\label{eq:aic est}
\hat{k}_{{\rm AIC}} = \argmin_{k \in \mathbb{N}: 0 \leq k < n}  -2(n-k)m \log \frac{g(k)}{a(k)}+ 2k(2n-k)
\end{equation}
while the Minimum Descriptive Length (MDL) criterion is given by
\begin{multline}\label{eq:mdl est}
\hat{k}_{{\rm MDL}} = \argmin_{k \in \mathbb{N}: 0 \leq k < n} - (n-k)m\log \frac{g(k)}{a(k)} \\+ \frac{1}{2}k(2n-k)\log m
\end{multline}
where $g(k) = \prod_{j=k+1}^{n} l_{j}^{1/(n-k)}$ is the geometric mean of the $n-k$ smallest sample eigenvalues and $a(k)= \frac{1}{n-k} \sum_{j = k+1}^{n} l_{j}$ is their arithmetic mean. 

It is known \cite{kailath-wax} that the AIC form inconsistently estimates the number of signals, while the MDL form estimates the number of signals consistently. The simplicity of the estimator, and the large sample consistency are among the primary reasons why the Kailath-Wax MDL estimator continues to be employed in practice \cite{vantrees02a}. In the two decades since the publication of the WK paper, researchers have come up with many innovative solutions  (\cite{fishler00a, fishler05a, larocque02a} to list a few) for making the estimators more robust by exploiting some type of prior knowledge. 

The most important deficiency of the WK and related estimators that remains unresolved occurs when the sample size is smaller than the number of sensors, \ie, when $m < n$. In this situation, the SCM is singular and the estimators become degenerate. Practitioners often overcome this in an ad-hoc fashion by, for example, restricting $k$ in (\ref{eq:mdl est}) to integer values in the range $0 \leq k < \min(n,m)$. Since large sample, \ie, $m \gg n$, asymptotics were used to derive the estimators in \cite{kailath-wax}, there is no rigorous theoretical justification for such a reformulation even if the simulation results suggest that the WK estimators are working ``well enough.''

Other sample eigenvalue based solutions found in the literature that exploit the sample eigenvalue order statistics \cite{fishler00a}, or employ a Bayesian framework by imposing priors on the number of signals \cite{bansal91a} are computationally more intensive and do not address the sample size starved setting in their analysis or their simulations. Particle filter based techniques \cite{larocque02a}, while useful, require the practitioner to the model the eigenvectors of the underlying population covariance matrix as well; this makes them especially sensitive to model mismatch errors that are endemic to high-dimensional settings. This motivates our development of an sample eigenvalue based estimator with a computational complexity comparable to that of the WK estimators.

\section{ESTIMATING THE NUMBER OF SIGNALS}\label{sec:number of signals}
Given an observation ${\bf y} = [y(1), \ldots, y(N)]$ and a family of models, or equivalently a parameterized family of probability densities $f({\bf y}|\bm{\theta})$ indexed by the parameter vector $\bm{\theta}$, we select the model which gives the minimum Akaike Information Criterion (AIC) \cite{akaike74a} defined by
\begin{equation}\label{eq:aic criterion}
{\rm AIC}_{k} = -2 \log f({\bf y} | \widehat{\bm{\theta}}) + 2 k  
\end{equation}
where $\widehat{\bm{\theta}}$ is the maximum likelihood estimate of $\bm{\theta}$, and $k$ is the number of free parameters in $\bm{\theta}$. We derive an AIC based estimator for the number of signals by exploiting the following distributional properties of the moments of eigenvalues of the (signal-free) SCM.

\begin{theorem}(Dumitriu-Edelman \cite{dumitriu05a})\label{th:fluctuations}
Assume $\widehat{{\bf R}}$ is formed from $m$ snapshots modelled as (\ref{eq:superposition problem}) with $k=0$, $\lambda=1$ then as $m,n \to \infty$ and $c_{m} = n/m \to c \in (0,\infty)$, then 
\begin{equation*}
\begin{bmatrix}
\sum_{i=1}^{n} l_{i} - n \\
\\
\sum_{i=1}^{n} l_{i}^{2} - n\,(1+c) - (\frac{2}{\beta}-1)c
\end{bmatrix}
\overset{\mathcal{D}}{\to} \mathcal{N}\left({\bf 0},{\bf Q}\right)
\end{equation*}
where $\mathcal{D}$ denotes convergence in distribution, $\beta = 1$ (or $2$) when ${\bf x}_{i}$ is real (or complex) valued, respectively, and
\begin{equation*}
{\bf Q} = 
\frac{2}{\beta}\begin{bmatrix}
c & 2c\,(c+1) \\
2c\,(c+1)  & 2c\,(2c^2+5c+2)\\
\end{bmatrix}.
\end{equation*}
\end{theorem}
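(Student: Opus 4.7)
The plan is to recognize that the two coordinates in the vector are $\mathrm{tr}(\widehat{\mathbf{R}})$ and $\mathrm{tr}(\widehat{\mathbf{R}}^2)$, so the statement is a joint central limit theorem for the polynomial linear eigenvalue statistics of a null ($k=0$) Wishart matrix, specialized to $p_1(x)=x$ and $p_2(x)=x^2$. I would prove the three ingredients separately: (i) identify the centering constants exactly; (ii) identify the asymptotic covariance $\mathbf{Q}$; (iii) prove joint asymptotic normality.

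For (i), since $\widehat{\mathbf{R}}=\tfrac{1}{m}\mathbf{X}\mathbf{X}'$ with i.i.d.\ standard real ($\beta=1$) or complex ($\beta=2$) Gaussian entries of unit variance, a direct Wick/Isserlis calculation gives $\mathbb{E}[\mathrm{tr}(\widehat{\mathbf{R}})]=n$ exactly, and
\begin{equation*}
\mathbb{E}[\mathrm{tr}(\widehat{\mathbf{R}}^2)] \;=\; n(1+c_m) + \bigl(\tfrac{2}{\beta}-1\bigr)c_m + o(1),
\end{equation*}
where the $(\tfrac{2}{\beta}-1)c$ correction is precisely the finite-size, real-versus-complex asymmetry produced by the self-contraction $\mathbb{E}|X_{ij}|^4 = \tfrac{2}{\beta}+1$ in the quartic Wick expansion of $\mathrm{tr}(XX'XX')$. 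This matches the centering in the statement.

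For (ii), I would compute $\mathrm{Cov}\bigl(\mathrm{tr}(\widehat{\mathbf{R}}^p),\mathrm{tr}(\widehat{\mathbf{R}}^q)\bigr)$ for $p,q\in\{1,2\}$ directly. The $(1,1)$ entry is immediate: $\mathrm{tr}(\widehat{\mathbf{R}})=\tfrac{1}{m}\sum_{i,j}|X_{ij}|^2$ is a sum of $nm$ i.i.d.\ chi-squares, giving variance $\tfrac{2}{\beta}\cdot \tfrac{n}{m} = \tfrac{2}{\beta}c$. The cross entry and the $(2,2)$ entry require cataloguing the connected pairings in the Wick expansions of $\mathbb{E}[\mathrm{tr}(\widehat{\mathbf{R}})\,\mathrm{tr}(\widehat{\mathbf{R}}^2)]$ and $\mathbb{E}[\mathrm{tr}(\widehat{\mathbf{R}}^2)^2]$; keeping only the pairings that contribute at leading order in $n,m\to\infty$ with $c_m\to c$ yields the stated entries $\tfrac{2}{\beta}\cdot 2c(c+1)$ and $\tfrac{2}{\beta}\cdot 2c(2c^2+5c+2)$.

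For (iii), the cleanest route is to invoke the tridiagonal $\beta$-Laguerre matrix model of Dumitriu--Edelman: $\widehat{\mathbf{R}}$ has the same eigenvalue distribution as $\tfrac{1}{m}\mathbf{L}\mathbf{L}^{T}$ for an explicit bidiagonal $\mathbf{L}$ whose entries are \emph{independent} (scaled) chi variables. Both $\mathrm{tr}(\widehat{\mathbf{R}})$ and $\mathrm{tr}(\widehat{\mathbf{R}}^2)$ are then polynomials of bounded degree in a growing collection of independent random variables with light tails, so a Lyapunov/Lindeberg CLT (or, equivalently, the method of moments showing that all higher joint cumulants are $o(1)$) gives joint Gaussian limits; combined with (i) and (ii) this identifies the limit as $\mathcal{N}(\mathbf{0},\mathbf{Q})$. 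The main obstacle is ingredient (ii): correctly enumerating the contractions in the quartic and sextic Wick expansions for $\mathrm{tr}(\widehat{\mathbf{R}}^2)$, keeping track of the $\beta$-dependent self-contraction weights, and verifying that only the ``planar/crossing'' pairings of the right type contribute at leading order. Once this bookkeeping is done cleanly (the tridiagonal model makes it essentially a combinatorial calculation on paths in a bidiagonal graph), the statement follows.
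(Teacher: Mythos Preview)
Your sketch is sound, but note that the paper does not actually prove this theorem: it is stated as a cited result from Dumitriu--Edelman \cite{dumitriu05a}, with no argument given in the paper beyond the citation. So there is no ``paper's own proof'' to compare against in the usual sense.

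That said, your proposal is essentially a reconstruction of the Dumitriu--Edelman argument itself. Step (iii), invoking the tridiagonal $\beta$-Laguerre model to reduce the CLT to a Lindeberg-type statement for polynomials in independent chi variables, is precisely the device introduced in that reference; steps (i) and (ii) are the moment computations that identify the limiting mean and covariance, and your values check out (for instance, in the real case $\mathbb{E}[\mathrm{tr}(\widehat{\mathbf{R}}^2)] = n(n+m+1)/m = n(1+c_m)+c_m$ exactly, matching $(2/\beta-1)c$ with $\beta=1$, and similarly $n(n+m)/m$ in the complex case). One minor remark: the $o(1)$ in your centering for $\mathrm{tr}(\widehat{\mathbf{R}}^2)$ is in fact zero when $c_m=n/m$; the formula is exact for finite $n,m$. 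The only place to be careful is, as you flag, the bookkeeping in (ii) for the $(1,2)$ and $(2,2)$ covariance entries, but the tridiagonal representation makes this a finite combinatorial check rather than a genuine obstacle.
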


\begin{proposition}\label{prop:main prop}
Assume $\widehat{{\bf R}}$ satisfies the hypotheses of Theorem \ref{th:fluctuations} for some $\lambda$ then as $m,n \to \infty$ and $c_{m} = n/m \to c \in (0,\infty)$, then
\begin{equation}
n\left[t_{n}-(1+c)\right] \overset{\mathcal{D}}{\to} \mathcal{N}\left(\left(\frac{2}{\beta}-1\right)c,\frac{4}{\beta}c^{2}\right) 
\end{equation}
and the test statistic $t_{n}$ is given by
\[
t_{n} = \frac{\tfrac{1}{n}\sum_{i} l_{i}^2}{\left(\tfrac{1}{n}\sum_{i} l_{i}\right)^{2}} = \frac{{\rm Second~moment~of~eigs}}{{\rm Mean~sq.~of~eigs}}
\]
\end{proposition}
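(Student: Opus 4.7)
\medskip

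\noindent\textbf{Proof proposal for Proposition \ref{prop:main prop}.}

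\medskip

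The plan is a two-line reduction followed by a delta-method calculation against Theorem~\ref{th:fluctuations}.

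\medskip

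First, I would observe that the statistic $t_n$ is scale invariant: replacing each $l_i$ by $\lambda\,l_i$ multiplies both numerator and denominator by $\lambda^2$. Thus without loss of generality I may take $\lambda=1$, which is precisely the hypothesis under which Theorem~\ref{th:fluctuations} is stated. Write $p_1 := \tfrac{1}{n}\sum_i l_i$ and $p_2 := \tfrac{1}{n}\sum_i l_i^2$, so $t_n = p_2/p_1^{\,2}$ and Theorem~\ref{th:fluctuations} gives, jointly,
\begin{equation*}
\begin{pmatrix} n(p_1-1) \\ n(p_2-(1+c)) - (\tfrac{2}{\beta}-1)c \end{pmatrix} \overset{\mathcal{D}}{\to} \mathcal{N}({\bf 0},{\bf Q}).
\end{equation*}

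\medskip

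Second, I would apply the delta method to $g(x,y)=y/x^2$ at the point $(x,y)=(1,1+c)$. One has $\partial_x g = -2y/x^3$ and $\partial_y g = 1/x^2$, so the gradient at $(1,1+c)$ is $(-2(1+c),\,1)$. Since $p_1\to 1$ in probability (by the marginal of the first coordinate above), Slutsky plus a first-order Taylor expansion of $g$ around $(1,1+c)$ yield
\begin{equation*}
n\bigl[t_n-(1+c)\bigr] \;=\; -2(1+c)\,n(p_1-1) \;+\; n\bigl(p_2-(1+c)\bigr) \;+\; o_{\mathbb{P}}(1).
\end{equation*}
The right-hand side is an affine functional of the joint Gaussian limit, so its limit is Gaussian with mean
\begin{equation*}
\mu \;=\; -2(1+c)\cdot 0 \;+\; \bigl(\tfrac{2}{\beta}-1\bigr)c \;=\; \bigl(\tfrac{2}{\beta}-1\bigr)c
\end{equation*}
and variance $\mathbf{v}^{T}{\bf Q}\,\mathbf{v}$ with $\mathbf{v}=(-2(1+c),1)^{T}$.

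\medskip

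Third, I would carry out the variance computation using the explicit ${\bf Q}$ from Theorem~\ref{th:fluctuations}:
\begin{equation*}
\mathbf{v}^{T}{\bf Q}\,\mathbf{v} \;=\; \tfrac{2}{\beta}\Bigl[\,4(1+c)^2 c \;-\; 4(1+c)\cdot 2c(c+1) \;+\; 2c(2c^2+5c+2)\,\Bigr],
\end{equation*}
and I expect the bracketed expression to collapse algebraically to $2c^2$, giving $\mathbf{v}^{T}{\bf Q}\,\mathbf{v} = 4c^2/\beta$, matching the claim. Since the first two steps are routine, the only moving part is this algebraic cancellation of the cubic-in-$c$ terms; this is where I expect the cleanest consistency check that Theorem~\ref{th:fluctuations} has been invoked correctly, and is really the only place a sign or factor error could hide.
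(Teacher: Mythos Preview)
Your proposal is correct and follows exactly the approach the paper itself indicates: the paper's proof is the single sentence ``This follows from applying the delta method to the results in Theorem~\ref{th:fluctuations},'' and your scale-invariance reduction plus the explicit gradient and variance computation simply flesh that out. The algebra you flagged does indeed collapse: $-4c(1+c)^2 + 2c(2c^2+5c+2) = 2c^2$, so $\mathbf{v}^{T}{\bf Q}\,\mathbf{v} = 4c^2/\beta$ as required.
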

\begin{proof}
This follows from applying the delta method \cite{casella90a} to the results in Theorem \ref{th:fluctuations}.
\end{proof}

When $k>0$ signals are present and assuming $k \ll n$, then the distributional properties of the $n-k$ ``noise'' eigenvalues are closely approximated by the distributional properties of the eigenvalues given by Theorem \ref{th:fluctuations} of the signal-free SCM, {\it i.e.}, $k=0$. Hence, by evaluating the statistic in Proposition \ref{prop:main prop} over a sliding window, and using the normal  approximation for the statistic from Proposition \ref{prop:main prop} with $c \approx n/m$ and $k+1$ free parameters in the AIC formulation in (\ref{eq:aic criterion}) results in the estimator:

\begin{subequations} \label{eq:THE estimator}
\begin{empheq}[
box=\setlength{\fboxrule}{1pt}\fbox]{equation}
\hat{k}_{{\rm NEW}}= \!\argmin_{k \in \mathbb{N}: 0\leq k < \min(n,m)} \! \left\{\frac{\beta}{4} \left[\frac{m}{n}\right]^{2} \! q_{k}^{2} \right\}\! + \! 2(k+1), 
\end{empheq}
\textrm{where}
\begin{multline}
q_{k} = n\left[\underbrace{\frac{\tfrac{1}{n-k}\sum_{i=k+1}^{n} l_{i}^{2}}{(\tfrac{1}{n-k}\sum_{i=k+1}^{n} l_{i})^{2}}}_{t_{n,k}} - \left(1+\frac{n}{m}\right) \right]-\\
\left(\frac{2}{\beta}-1\right)\frac{n}{m}.
\end{multline}
\end{subequations}

Here $\beta = 1$ if ${\bf x}_{i} \in \mathbb{R}^{n}$, and $\beta = 2$ if ${\bf x}_{i} \in \mathbb{C}^{n}$. When the measurement vectors represent quaternion valued narrowband signals, then we set $\beta =4$. Quaternion valued vectors arise when the data collected from vector sensors is represented using quaternions as in \cite{miron06}.

\section{FUNDAMENTAL LIMIT OF DETECTION}\label{sec:effective number of signals}
The following result exposes when the ``signal'' eigenvalues are asymptotically distinguishable from the ``noise'' eigenvalues.
\begin{proposition}\label{prop:spiked convergence}
Assume $\widehat{{\bf R}}$ satisfies the hypotheses of Theorem \ref{th:fluctuations}. Denote the eigenvalues of ${\bf R}$ by $\lambda_{1} \geq \lambda_{2} > \ldots \geq \lambda_{k} > \lambda_{k+1} = \ldots \lambda_{n} = \lambda= \sigma^{2}$. Let $l_{j}$ denote the $j$-th largest eigenvalue of $\widehat{{\bf R}}$. Then as $n,m \to \infty$ with $c_{m} = n/m \to c \in (0,\infty)$, and $j=1,\ldots,k+1$,
\begin{equation}
l_{j} \to 
\begin{cases}
\lambda_{j} \left( 1+ \dfrac{\sigma^{2}\,c}{\lambda_{j}-\sigma^{2}}\right) & {\rm if } \, \lambda_{j} > \sigma^{2}\,(1+\sqrt{c})\\
& \\
\sigma^{2}\,(1+\sqrt{c})^{2} & {\rm if } \, \lambda_{j} \leq \sigma^{2}(1+\sqrt{c})\\
\end{cases}
\end{equation}
where the convergence is almost surely.
\end{proposition}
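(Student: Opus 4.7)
The statement is the classical BBP/Baik--Silverstein phase transition for a spiked covariance model in the Marchenko--Pastur regime, so my plan follows the standard secular-equation + Stieltjes transform route. By the orthogonal/unitary invariance of the Gaussian distribution I may assume $R$ is diagonal, $R = \mathrm{diag}(\Lambda,\sigma^{2} I_{n-k})$ with $\Lambda = \mathrm{diag}(\lambda_{1},\ldots,\lambda_{k})$, and write $X = R^{1/2}Z$ where $Z$ has i.i.d.\ standard (real or complex) Gaussian entries, partitioned as $Z = [Z_{1}^{\top}\ Z_{2}^{\top}]^{\top}$ with $Z_{1}$ the top $k$ rows. Set $S = m^{-1}Z_{2}Z_{2}^{\prime}$. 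The matrix $\widehat{R}$ is then a $2\times 2$ block matrix with signal block $m^{-1}\Lambda^{1/2}Z_{1}Z_{1}^{\prime}\Lambda^{1/2}$, noise block $\sigma^{2}S$, and Gaussian cross terms.

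For $l$ strictly above the asymptotic bulk $[\sigma^{2}(1-\sqrt{c})^{2},\sigma^{2}(1+\sqrt{c})^{2}]$ of $\sigma^{2}S$, the matrix $\sigma^{2}S - lI$ is almost surely invertible for all $m,n$ large. A Schur complement reduces the secular equation $\det(\widehat{R}-lI)=0$ to $\det(\Phi(l) - l\Lambda^{-1}) = 0$, where
\[
\Phi(l) \;=\; \tfrac{1}{m}Z_{1}Z_{1}^{\prime} \;-\; \tfrac{\sigma^{2}}{m^{2}}Z_{1}Z_{2}^{\prime}(\sigma^{2}S - lI)^{-1}Z_{2}Z_{1}^{\prime}.
\]
Using the independence of $Z_{1}$ from $Z_{2}$ and the Gaussian trace lemma for quadratic forms, conditional on $Z_{2}$, the first term converges a.s.\ to $I_{k}$ and the second concentrates around $m^{-1}\mathrm{tr}(S(\sigma^{2}S-lI)^{-1})\,I_{k}$. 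Rewriting $S(\sigma^{2}S-lI)^{-1} = \sigma^{-2}[I + l(\sigma^{2}S-lI)^{-1}]$ and invoking the Marchenko--Pastur convergence of the spectral distribution of $\sigma^{2}S$ (with ratio $(n-k)/m \to c$), I obtain a deterministic limit $\Phi(l) \to \gamma(l) I_{k}$ where $\gamma(l) = 1 - c - c\,l\,G(l)$ and $G$ is the Stieltjes transform of the limit law of $\sigma^{2}S$.

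The limiting secular equation is then $l = \lambda_{j}\gamma(l)$ for some $j\in\{1,\ldots,k\}$. Substituting the explicit MP Stieltjes transform and simplifying reduces this, after routine algebra, to the single equation $l = \lambda_{j} + \lambda_{j}\sigma^{2}c/(\lambda_{j}-\sigma^{2})$. The map $h(\lambda) = \lambda + \lambda\sigma^{2}c/(\lambda-\sigma^{2})$ is strictly increasing on $(\sigma^{2}(1+\sqrt{c}),\infty)$ and attains its minimum value $\sigma^{2}(1+\sqrt{c})^{2}$ at $\lambda = \sigma^{2}(1+\sqrt{c})$, so a solution above the bulk exists precisely when $\lambda_{j} > \sigma^{2}(1+\sqrt{c})$, giving the first branch of the claim. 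For the subthreshold case $\lambda_{j} \le \sigma^{2}(1+\sqrt{c})$, no outlier is produced: since $R - \sigma^{2}I$ has rank $k$, $\widehat{R}$ is a finite-rank perturbation of $\sigma^{2}m^{-1}ZZ^{\prime}$, so the two matrices share the Marchenko--Pastur limiting spectral distribution and their ordered eigenvalues interlace within $k$ positions by Weyl's inequality. Combined with the Bai--Yin theorem giving $l_{\max}(\sigma^{2}m^{-1}ZZ^{\prime}) \to \sigma^{2}(1+\sqrt{c})^{2}$, this forces $l_{j} \to \sigma^{2}(1+\sqrt{c})^{2}$.

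The main obstacle is the a.s.\ concentration step for $\Phi(l)$: one needs the trace-lemma bounds to hold simultaneously for all $l$ in a neighborhood of the candidate limit, which requires a uniform resolvent estimate together with some care to handle the event on which $\sigma^{2}S - lI$ is singular. Once this uniform control is established, identifying the outlier and ruling it out in the subthreshold regime reduces to algebra with $m_{\mathrm{MP}}$ and standard interlacing/Bai--Yin arguments, both of which are by-now classical in random matrix theory.
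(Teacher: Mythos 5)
The paper offers no proof of this proposition: it simply cites Baik--Silverstein \cite{BaikS06} (and Paul \cite{Paul05a} for $c<1$, real case), so what you have written is essentially a reconstruction of the cited argument rather than an alternative to anything in the paper. Your supercritical branch is correct and is the standard route: after diagonalizing ${\bf R}$, the Schur complement does reduce $\det(\widehat{{\bf R}}-lI)=0$ to $\det(\Phi(l)-l\Lambda^{-1})=0$, the quadratic-form concentration gives $\Phi(l)\to\gamma(l)I_{k}$ with $\gamma(l)=1-c-c\,l\,G(l)$, and substituting the Marchenko--Pastur Stieltjes transform into $l=\lambda_{j}\gamma(l)$ does yield $l=\lambda_{j}+c\lambda_{j}\sigma^{2}/(\lambda_{j}-\sigma^{2})$, with the threshold $\sigma^{2}(1+\sqrt{c})$ emerging exactly as you describe from the monotonicity of $\lambda\mapsto\lambda+c\lambda\sigma^{2}/(\lambda-\sigma^{2})$. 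You also correctly flag that the concentration of $\Phi(l)$ must be made uniform in $l$ on a compact set above the bulk edge.

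The genuine gap is in your subthreshold branch. Weyl interlacing for a rank-$k$ (symmetric) perturbation gives $\lambda_{j+k}(B)\le\lambda_{j}(A)\le\lambda_{j-k}(B)$, and for the indices you care about, $j\le k+1$, the upper bound is vacuous ($j-k\le 1$, and for $j\le k$ there is literally no constraint from above). Moreover the perturbation here is multiplicative, ${\bf R}^{1/2}(m^{-1}ZZ'){\bf R}^{1/2}$ versus $\sigma^{2}m^{-1}ZZ'$; passing to the $m\times m$ companion matrices the difference $m^{-1}Z'\Psi Z$ is positive semidefinite of rank $\le k$, so interlacing only pushes $l_{j}$ \emph{up}: you get $\liminf l_{j}\ge\sigma^{2}(1+\sqrt{c})^{2}$ almost surely, but nothing prevents, a priori, an outlier escaping above the bulk edge. "No outlier is produced" is precisely the nontrivial claim and cannot be obtained from interlacing plus Bai--Yin. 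To close it you must return to your own secular equation: show that for every $l>\sigma^{2}(1+\sqrt{c})^{2}+\epsilon$ one has $\gamma(l)<l/\lambda_{j}$ whenever $\lambda_{j}\le\sigma^{2}(1+\sqrt{c})$ (using that $\gamma(l)/l$ is decreasing above the edge and its supremum there equals $1/(\sigma^{2}(1+\sqrt{c}))$), together with the uniform convergence of $\Phi$ and an a priori bound confining $l_{1}(\widehat{{\bf R}})$ to a compact set; alternatively invoke the Bai--Silverstein ``no eigenvalues outside the support'' and exact-separation theorems, which is how \cite{BaikS06} handles this case. With that repair the argument is complete.
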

\begin{proof}
This result appears in \cite{BaikS06} for very general settings. A matrix theoretic proof for when $c<1$ for the real case may be found in \cite{Paul05a} and an interacting particle system interpretation appears in \cite{raj:thesis}.
\end{proof}

Motivated by Proposition \ref{prop:spiked convergence}, we define the effective number of signals as
\begin{empheq}[
box=\setlength{\fboxrule}{1pt}\fbox]{equation}\label{eq:effective rank}
k_{{\rm eff}}({\bf R}) = \#\textrm{ eigs. of }{\bf R} > \sigma^{2}\left(1+\sqrt{\dfrac{n}{m}}\right).
\end{empheq}

\subsection{Identifiability of closely spaced signals}
Suppose there are two uncorrelated (hence, independent) signals so that ${\bf R}_{s} = \textrm{diag}(\sigma_{{\rm S}1}^{2},\sigma_{{\rm S}2}^{2})$. In (\ref{eq:superposition problem}), let ${\bf A} = [{\bf v}_{1} {\bf v}_{2}]$. In a sensor array processing application, we think of ${\bf v}_{1} \equiv {\bf v}(\theta_{1})$ and ${\bf v}_{2} \equiv {\bf v}_{2}(\theta_{2})$ as encoding the array manifold vectors for a source and an interferer with powers $\sigma_{{\rm S}1}^{2}$ and $\sigma_{{\rm S}2}^{2}$, located at $\theta_{1}$ and $\theta_{2}$, respectively. The covariance matrix given by 
\begin{equation}
{\bf R} = \sigma_{{\rm S}1}^{2} {\bf v}_{1}{\bf v}_{1}'+ \sigma_{{\rm S}2}^{2}  {\bf v}_{2}{\bf v}_{2}' + \sigma^{2} {\bf I}
\end{equation}
has the $n-2$ smallest eigenvalues $\lambda_{3} = \ldots = \lambda_{n} = \sigma^{2}$ and the two largest eigenvalues
\begin{subequations}\label{eq:ev 2 sources}
\begin{multline}
\lambda_{1} =  
\sigma^{2}+ \tfrac{\left(\sigma_{{\rm S}1}^{2} \parallel \! {\bf v}_{1} \!\parallel^{2}+\sigma_{{\rm S}2}^{2} \parallel \! {\bf v}_{2} \!\parallel^{2}\right)}{2} 
\\+ \tfrac{
\sqrt{\left(\sigma_{{\rm S}1}^{2} \parallel \! {\bf v}_{1} \!\parallel^{2}-\sigma_{{\rm S}2}^{2} \parallel \! {\bf v}_{2} \!\parallel^{2}\right)^{2}+4\sigma_{{\rm S}1}^{2}\sigma_{{\rm S}2}^{2} |\langle {\bf v}_{1}, {\bf v}_{2} \rangle| ^{2}}}{2}
\end{multline}
\begin{multline}
\lambda_{2} =  \sigma^{2}+ \tfrac{\left(\sigma_{{\rm S}1}^{2} \parallel \! {\bf v}_{1} \!\parallel^{2}+\sigma_{{\rm S}2}^{2} \parallel \! {\bf v}_{2} \!\parallel^{2}\right)}{2} \\
-\tfrac{\sqrt{\left(\sigma_{{\rm S}1}^{2} \parallel \! {\bf v}_{1} \!\parallel^{2}-\sigma_{{\rm S}2}^{2} \parallel \! {\bf v}_{2} \!\parallel^{2}\right)^{2}+4\sigma_{{\rm S}1}^{2}\sigma_{{\rm S}2}^{2} |\langle {\bf v}_{1}, {\bf v}_{2} \rangle| ^{2}}}{2}
\end{multline}
\end{subequations}
respectively. Applying the result in Proposition \ref{prop:spiked convergence} allows us to express the effective number of signals as  
\begin{equation}
k_{{\rm eff}} = 
\begin{cases}
2 &\qquad\textrm{if    } \phantom{~~~~}\sigma^{2} \left(1+\sqrt{\dfrac{n}{m}} \right) < \lambda_{2}\\
  &\\
1 &\qquad\textrm{if    } \phantom{~~~~}\lambda_{2} \leq \sigma^{2} \left(1+\sqrt{\dfrac{n}{m}} \right) < \lambda_{1}\\
   &\\
0 &\qquad\textrm{if    } \phantom{~~~~}\lambda_{1} \leq \sigma^{2} \left(1+\sqrt{\dfrac{n}{m}} \right) \\
\end{cases}
\end{equation}
In the special situation when  $\parallel\! {\bf v}_{1} \!\parallel = \parallel\! {\bf v}_{2}\! \parallel = \parallel\! {\bf v}\! \parallel$ and $\sigma_{{\rm S1}}^{2} = \sigma_{{\rm S2}}^{2} = \sigma_{{\rm S}}^{2}$, we can (in an asymptotic sense) reliably detect the presence of \textit{both signals} from the sample eigenvalues alone whenever
\begin{empheq}[
box=\setlength{\fboxrule}{1pt}\fbox]{equation}
\label{eq:array tradeoff}
\sigma_{{\rm S}}^{2} \parallel \! {\bf v} \!\parallel^{2} \left(1-\dfrac{|\langle {\bf v}_{1},{\bf v}_{2}\rangle |}{\parallel \! {\bf v}\parallel}\right)  > \sigma^{2} \sqrt{\dfrac{n}{m}}  
\end{empheq}
Equation (\ref{eq:array tradeoff}) captures the tradeoff between the identifiability of two closely spaced signals, the dimensionality of the system, the number of available snapshots and the cosine of the angle between the vectors ${\bf v}_{1}$ and ${\bf v}_{2}$. It may prove to be a useful heuristic for experimental design.

\section{SIMULATIONS}
\label{sec:lrcf simulations}

\begin{figure}[t]
\subfigure[Prob$(\hat{k} = 2)$ versus $n$ for fixed $n/m$.]
{
\label{fig:numerics 1}
\includegraphics[width=7.9cm]{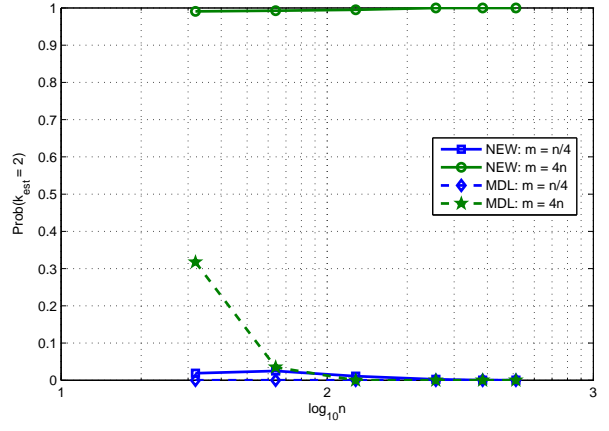}
}
\subfigure[Prob$(\hat{k} = 1)$ versus $n$ for fixed $n/m$.]
{
\label{fig:numerics 2}
\includegraphics[width=7.9cm]{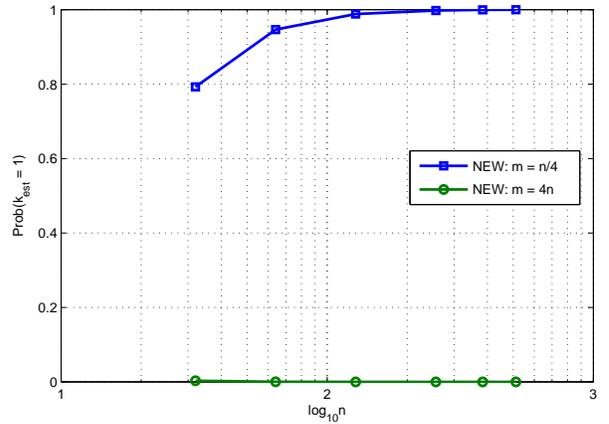}
}
\caption{Comparison of the estimators over 20,000 trials.}
\label{fig:numerics}
\end{figure}

Assume the covariance matrix ${\bf R}$ has $n-2$ ``noise'' eigenvalues with $\sigma^{2}=1$, and two ``signal'' eigenvalues with $\lambda_{1} = 10$ and $\lambda_{2} = 3$. When $m = 4n$ samples are available, Figure \ref{fig:numerics 1} shows that the proposed estimator consistently detects two signals while the WK MDL estimator does not. However, when $m = n/4$, Figure \ref{fig:numerics 1} suggests that neither estimator is able to detect both the signals present. A closer examination of the empirical data presents a different picture. For the covariance matrix considered, when $m = n/4$, then from (\ref{eq:effective rank}), $k_{eff}=1$. Figure \ref{fig:numerics 2} shows that for large $n$ and  $m = n/4$, the new estimator consistently estimates one signal, as expected. The WK MDL estimator detects no signals. We conjecture that the new estimator consistently estimates $k_{eff}$ in the $n,m \to \infty, n/m \to c$ sense.

\section{CONCLUSIONS}\label{sec:lrcf conclusion}

An estimator for the number of signals in white noise was presented that exhibits robustness to high-dimensionality, and sample size constraints. The concept of \textit{effective number of signals} described provides insight into the (asymptotic) regime in which reliable detection with sample eigenvalue based methods, including the proposed method, is possible. This helps identify scenarios where algorithms that exploit any structure in the eigenvectors of the signals, such as the MUSIC and the Capon-MVDR \cite{vantrees02a} algorithms in sensor array processing, might be better able to tease out lower level signals from the background noise. It is worth noting that the proposed approach remains relevant in situations where the eigenvector structure has been identified. This is because eigen-inference methodologies are inherently robust to eigenvector modelling errors that are endemic to high-dimensional settings. Thus the practitioner may use the proposed estimator to complement and ``robustify'' the inference provided by  algorithms that exploit the eigenvector structure. 

\begin{center}
{\bf ACKNOWLEDGEMENTS}\\
\end{center}

We thank Arthur Baggeroer, William Ballance and the anonymous reviewers for their feedback and encouragement.


\bibliographystyle{IEEEbib}
\bibliography{randbib}

\def\cprime{$'$} \def\cprime{$'$}
\begin{thebibliography}{10}

\bibitem{kailath-wax}
Mati Wax and Thomas Kailath,
\newblock ``Detection of signals by information theoretic criteria,''
\newblock {\em IEEE Trans. Acoust. Speech Signal Process.}, vol. 33, no. 2, pp.
  387--392, 1985.

\bibitem{dumitriu05a}
I.~Dumitriu and A.~Edelman,
\newblock ``Global spectrum fluctuations for the {$\beta$}-{H}ermite and
  {$\beta$}-{L}aguerre ensembles via matrix models,''
\newblock {\em J. Math. Phys.}, vol. 47, no. 6, pp. 063302, 36, 2006.

\bibitem{bartlett54a}
M.~S. Bartlett,
\newblock ``A note on the multiplying factors for various $\chi^{2}$
  approximations,''
\newblock {\em J. Roy. Stat. Soc., ser. B}, vol. 16, pp. 296--298, 1954.

\bibitem{lawley56a}
D.~N. Lawley,
\newblock ``Tests of significance of the latent roots of the covariance and
  correlation matrices,''
\newblock {\em Biometrica}, vol. 43, pp. 128--136, 1956.

\bibitem{vantrees02a}
H.~L.~Van Trees,
\newblock {\em Detection, Estimation, and Modulation Theory Part IV: Optimum
  Array Processing},
\newblock John wiley and Sons, Inc., new York, 2002.

\bibitem{fishler00a}
Eran Fishler and Hagit Messer,
\newblock ``On the use of order statistics for improved detection of signals by
  the mdl criterion,''
\newblock {\em IEEE Trans. of Signal Process.}, vol. 48, no. 8, pp. 2242--2247,
  August 2000.

\bibitem{fishler05a}
Eran Fishler and H.~Vincent Poor,
\newblock ``Estimation of the number of sources in unbalanced arrays via
  information theoretic criteria,''
\newblock {\em IEEE Trans. of Signal Process.}, vol. 53, no. 9, pp. 3543--3553,
  September 2005.

\bibitem{larocque02a}
Jean-Ren{\'e} Larocque, James~P. Reilly, and William Ng,
\newblock ``Particle filters for tracking and unknown number of sources,''
\newblock {\em IEEE Trans. of Signal Processing}, vol. 50, no. 12, pp.
  2926--2937, December 2002.

\bibitem{bansal91a}
N.~K. Bansal and M.~Bhandary,
\newblock ``Bayes estimation of number of signals,''
\newblock {\em Ann. Inst. Statist. Math.}, vol. 43, no. 2, pp. 227--243, 1991.

\bibitem{akaike74a}
Hirotugu Akaike,
\newblock ``A new look at the statistical model identification,''
\newblock {\em IEEE Trans. Automatic Control}, vol. AC-19, pp. 716--723, 1974,
\newblock System identification and time-series analysis.

\bibitem{casella90a}
George Casella and Roger~L. Berger,
\newblock {\em Statistical inference},
\newblock The Wadsworth \& Brooks/Cole Statistics/Probability Series. Wadsworth
  \& Brooks/Cole Advanced Books \& Software, Pacific Grove, CA, 1990.

\bibitem{miron06}
S.~Miron, N.~Le~Bihan, and J.I. Mars,
\newblock ``Quaternion-{MUSIC} for vector-sensor array processing,''
\newblock {\em IEEE Trans. on Signal Process.}, vol. 54, no. 4, pp. 1218--1229,
  April 2006.

\bibitem{BaikS06}
J.~Baik and J.~W. Silverstein,
\newblock ``Eigenvalues of large sample covariance matrices of spiked
  population models,''
\newblock {\em Journal of Multivariate Analysis}, , no. 6, pp. 1382--1408,
  2006.

\bibitem{Paul05a}
D.~Paul,
\newblock ``Asymptotics of sample eigenstructure for a large dimensional spiked
  covariance model,''
\newblock Technical report, Stanford University, 2005,
\newblock {\small {\tt
  http://anson.ucdavis.edu/\~\/debashis/\\techrep/eigenlimit.pdf}}.

\bibitem{raj:thesis}
R.~R. Nadakuditi,
\newblock {\em {A}pplied {S}tochastic {E}igen-{A}nalysis},
\newblock Ph.D. thesis, {M}assachusetts {I}nstitute of {T}echnology, February
  2007,
\newblock {D}epartment of {E}lectrical {E}ngineering and {C}omputer {S}cience.

\end{thebibliography}

\end{document}